\newcommand{\bigpare}[1]{\bigl(#1\bigr)}
\newcommand{\biggpare}[1]{\biggl(#1\biggr)}
\newcommand{\bigbrac}[1]{\bigl[#1\bigr]}
\newcommand{\Bigbrac}[1]{\Bigl[#1\Bigr]}
\newcommand{\biggbrac}[1]{\biggl[#1\biggr]}
\newcommand{\bigset}[2]{\bigl\{#1\bigm|#2\bigr\}}
\newcommand{\norm}[1]{\| #1 \|}
\newcommand{\biggabs}[1]{\biggl| #1 \biggr|}
\newcommand{\jap}[1]{\langle #1 \rangle}
\def\a{\alpha}
\def\b{\beta}
\def\c{\gamma}
\def\d{\delta}
\def\e{\varepsilon}
\def\l{\lambda}
\def\m{\mu}
\def\n{\nu}
\def\o{\omega}
\def\s{\sigma}
\def\y{\eta}
\renewcommand{\O}{\Omega}
\def\re{\mathbb{R}}
\def\co{\mathbb{C}}
\def\ze{\mathbb{Z}}
\def\pa{\partial}
\renewcommand{\Re}{\text{{\rm Re}}}
\renewcommand{\Im}{\text{{\rm Im}}}
\newcommand{\supp}{\text{{\rm supp}\;}}
\newcommand{\dist}{\text{\rm dist}}
\newtheorem{thm}{Theorem}
\newtheorem{lem}[thm]{Lemma}
\theoremstyle{definition}
\theoremstyle{remark}
\begin{document}

\title{A Note on the Analyticity of Density of States}
\author{M. Kaminaga\thanks{Department of Electrical Engineering and Information Technology, 
Tohoku Gakuin University, Tagajo, 985-8537, JAPAN. 
Email: \texttt{kaminaga@tjcc.tohoku-gakuin.ac.jp}.}, 
M. Krishna\thanks{Institute of Mathematical Sciences, CIT Campus, Taramani 600 113, Chennai, INDIA. 
Email: \texttt{krishna@imsc.res.in}.}, 
S. Nakamura\thanks{Graduate School of Mathematical Sciences, 
University of Tokyo, 3-8-1 Komaba, Meguroku, Tokyo, 153-8914, JAPAN. 
Email: \texttt{shu@ms.u-tokyo.ac.jp}. Partially supported by JSPS Grant Kiban (A) 21244008.}}
\maketitle

\begin{abstract}
We consider the $d$-dimensional Anderson model, and we prove the density of states is 
locally analytic if the single site potential distribution is locally analytic and the disorder is 
large. We employ the random walk expansion of resolvents and a simple complex function theory trick. 
In particular, we discuss the uniform distribution case, and we obtain a sharper result using more 
precise computations. The method can be also applied to prove the analyticity of 
the correlation functions. 
\end{abstract}

\section{Introduction}
We consider the Anderson tight binding model, i.e., a random Schr\"odinger operator
\[
H^\o =H_0+ V^\o \quad \text{on } \mathcal{H}= \ell^2(\ze^d), 
\]
where $d\geq 1$, $V^\o =\{V^\o(n)\,|\,n\in\ze^d\}$ are i.i.d.\ random variables with the common 
distribution $\m$, and $H_0$ is given by
\[
H_0u(n)= h\sum_{|n-m|=1} u(m)\quad \text{for }u\in\mathcal{H}. 
\]
with a constant $h>0$. 

For a finite box $\Gamma\subset\ze^d$, we denote by $H_{\Gamma}^{\omega}$ the operator 
$H^{\omega}$ restricted to $\ell^2(\Gamma)$ with Dirichlet boundary conditions.
The integrated density of states (IDS for short), ${\cal N}(E)$, is defined by
\begin{displaymath}
{\cal N}(E)=\lim_{\Gamma\to\ze^d}\frac{1}{\#\Gamma}\#\{\mbox{eigenvalues of $H_{\Gamma}^{\omega}\leq E$}\}.
\end{displaymath}
Here we denote the cardinality of a set $S$ by $\#S$.
It is a consequence of ergodic theorem that 
for almost every $\omega$ the limit exists for all $E\in\re$, where, the limit function ${\cal N}$ is continuous, 
and is independent of $\omega$. Moreover $\supp (d{\mathcal N})=\sigma(H^{\omega})$ a.e. $\omega$.
The basic facts about the density of states is found in any of the standard
books in the area for example Cycon-Froese-Kirsch-Simon \cite{CFKS}, 
Carmona-Lacroix \cite{CarLac} and Figotin-Pastur \cite{FigPas}.
It is a result of Pastur \cite{Pastur} and Delyon-Souillard \cite{DS} that ${\cal N}(E)$ is always continuous.
The IDS ${\mathcal N}(E)$ is positive, non-decreasing and bounded (by 1) function
satisfying ${\mathcal N}(\infty) = 1$.
So it is the distribution function of a probability measure.  In the case
when this measure is absolutely continuous, the density $n(E)$ of this measure
is called the   ``the density of states". One of the questions of interest
is the degree of smoothness of the function $n(E)$, which is also often
referred to as the smoothness of IDS, which we do in the following.  

Then our main result is stated as follows:

\begin{thm}
Let $I\Subset I'\Subset \re$ be intervals, and suppose $\m$ has an analytic  density function 
$g(\l)$ on $I'$. Then there is $h_0>0$ such that $n(\l)$ is analytic on $I$ if $0<h<h_0$. 
\end{thm}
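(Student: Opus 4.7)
My plan is to exploit the formula
\[
n(\lambda)=\frac{1}{\pi}\operatorname{Im} F(\lambda+i0),\qquad F(z):=\mathbb{E}\bigl[\bigl\langle\delta_0,(H^\omega-z)^{-1}\delta_0\bigr\rangle\bigr],
\]
where $F$ is the Stieltjes transform of $d{\cal N}$ and is holomorphic on $\co\setminus\re$. To prove $n$ is real-analytic on $I$ it is enough to construct a holomorphic extension $\widetilde F$ of $F|_{\operatorname{Im} z>0}$ to a complex neighborhood $U\subset\co$ of $I$: the imaginary part of a holomorphic function restricted to $U\cap\re$ is then real-analytic, and must agree with $\pi\,n(\lambda)$ by the Stieltjes inversion formula.

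\textbf{Random walk expansion.} Write $H^\omega=V^\omega+hT$ with $T$ the adjacency operator (so $\|T\|\le 2d$), and expand
\[
(H^\omega-z)^{-1}=\sum_{k=0}^{\infty}(-h)^k\,(V^\omega-z)^{-1}\bigl(T(V^\omega-z)^{-1}\bigr)^k.
\]
Taking the $(0,0)$ matrix element turns this into a sum over closed walks $\gamma$ on $\ze^d$ of length $k$ starting and ending at the origin, weighted by $(-h)^k\prod_{j=0}^{k}(V^\omega(\gamma_j)-z)^{-1}$. Taking expectation and using independence, a walk visiting distinct sites with multiplicities $k_1,\dots,k_\ell$ contributes $\prod_{i=1}^{\ell}J_{k_i}(z)$, where
\[
J_k(z):=\int_{\re}\frac{d\m(\l)}{(\l-z)^k}.
\]
All analytic information about the randomness is packaged into these one-dimensional integrals.

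\textbf{The complex analysis trick.} This is where the analyticity of $g$ enters. Decompose $d\m=g(\l)\i(\l)\,d\l+d\m_0(\l)$ where $\i$ is a smooth cutoff equal to $1$ on a neighborhood of $I$ and supported in $I'$, and $\m_0$ is supported away from that neighborhood. The $\m_0$-part of $J_k$ is automatically holomorphic for $z$ close to $I$, since $\l-z$ stays bounded away from $0$. For the remaining piece, $g\i$ extends holomorphically to a complex neighborhood of the set where $\i\equiv 1$. For $\operatorname{Im} z>0$, Cauchy's theorem lets me deform the real contour, inside this neighborhood, to a contour $\Gamma$ that dips into the lower half-plane below $I$; the deformed integral then manifestly defines a holomorphic function of $z$ on some complex neighborhood $U$ of $I$ lying strictly above $\Gamma$. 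On $\Gamma$ one has the deterministic bound $|\l-z|\ge\rho>0$ uniformly for $z\in U$, so $|J_k(z)|\le C\rho^{-k}$.

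\textbf{Main obstacle: uniform summability.} The delicate step is combining the deformation with the random walk sum so that everything converges uniformly in $z\in U$. After the deformation we have a purely deterministic upper bound $|J_k|\le C\rho^{-k}$, and for a closed walk of length $k$ the product $\prod_i J_{k_i}$ has $\sum_i k_i=k+1$, hence is bounded by $C^{\ell}\rho^{-(k+1)}\le(\max(C,1)\rho^{-1})^{k+1}$. Since the number of closed walks of length $k$ is at most $(2d)^k$, the full series is dominated by $\sum_k\bigl(2d\,h\,C'\bigr)^k$ for some $C'=C'(g,I,I')$. Choosing $h_0:=(2d\,C')^{-1}$ yields uniform convergence on $U$ for $0<h<h_0$, so $F$ is realised on $U$ as a uniform limit of holomorphic functions, hence holomorphic. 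This supplies the extension $\widetilde F$ and finishes the proof. The one point that requires care is the bookkeeping in the combinatorial estimate, since one must absorb the factor $C^{\ell}$ coming from the number of \emph{distinct} sites visited into the length-$k$ bound without losing summability; this is the step I expect to take the most attention when writing out the details.
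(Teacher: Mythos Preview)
Your proposal is correct and follows essentially the same route as the paper: reduce to analytic continuation of $F(z)=\mathbb{E}[(H^\omega-z)^{-1}(0,0)]$, use the random walk expansion, analytically continue each moment $J_k$ (the paper's $B_\ell$) by deforming the integration contour into $\co_-$ via Cauchy's theorem to get a uniform bound $C\rho^{-k}$, and then sum using $\#\{\text{walks of length }k\}\le(2d)^k$. The only cosmetic difference is that the paper splits the integral sharply at $\partial I'$ instead of introducing a smooth cutoff $\chi$, which makes the contour bookkeeping slightly cleaner; your stated worry about the factor $C^\ell$ is harmless since $\ell\le k+1$, exactly as in the paper's estimate.
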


Our argument is so simple that we have good control of the constant. 

There are many results on the smoothness of IDS for one-dimensional case. 
For example, ${\cal N}(E)$ is differentiable, even infinitely differentiable under some 
regularity assumptions on $\mu$ (Campanino-Klein \cite{CK} and Simon-Taylor \cite{ST}).
Moreover the smoothness of IDS in the Anderson model on a strip are considered, for example, by 
Klein-Speis \cite{KS1988}, Klein-Lacroix-Speis \cite{KLS1989}, Glaffig \cite{CG} and Klein-Speis \cite{KS1990}.

On the other hand, there are very few results on the smoothness of IDS for 
multi-dimensional case. Using Molchanov formula (of expressing the
matrix elements of $e^{-itH^\omega}$ in terms of a random walk
on the lattice), Carmona showed (see section VI.3 \cite{CarLac} ) that
for the Cauchy distribution the IDS is $C^\infty$.    
Recently, Veseli\'{c} \cite{VESELIC} shows 
the Lipschitz-continuity of IDS for homogeneous Gaussian random potentials using a Wegner estimate. 

Among the most important other results in the multi-dimensional case 
are Bovier-Campanino-Klein-Perez \cite{BCKP},  
Constantinescu-Fr\"ohlich-Spencer\cite{CFS} and Bellissard-Hislop \cite{BelHis}
and all the available
results require that $h$ is small or 
the region of energy considered is away from the middle of the spectrum.
We also consider the case with small $h$, which corresponds to the large disorder case.
A typical result in Bovier-Campanino-Klein-Perez \cite{BCKP} is 
that $\mathcal{N}(E)$ is $(n+1)$-times continuously differentiable 
under the condition that the Fourier transform $\phi(t)$ of $d\mu$ 
satisfies $(1+|t|)^{d+n}\phi(t)\in L^1$.
They also show that if $\phi(t)$ decays exponentially, then $\mathcal{N}(\l)$ has 
an analytic continuation to a strip, provided $h$ is sufficiently small. 
On the other hand Constantinescu-Fr\"ohlich-Spencer \cite{CFS} 
shows that $\mathcal{N}(E)$ is real analytic in $E$, for $|\Re E|$ large enough if
the density of $\mu$ is analytic in the strip $\{V:|\Im V|<2(d+\epsilon)\}$ for 
arbitrarily small, but positive $\epsilon$. 
Bellissard-Hislop \cite{BelHis} proves that if the distribution $d\mu$ 
has a density analytic in a strip about the real axis, then these 
correlation functions are also analytic outside of the planes corresponding to coincident energies. 
In particular, their result implies the analyticity of $n(E)$, 
and of current-current correlation function outside of the diagonal.
The paper \cite{BelHis} employs Taylor expansions to construct analytic continuation, 
whereas we use the Cauchy theorem to show the existence of the analytic continuation. 
This elementary observation greatly simplifies the argument, and also extends the applicability. 
In particular, it makes the argument local in the energy variable, and thus we need the analyticity 
of the density function only locally. 

We prove Theorem~1 in the next section. In Section~3, we discuss an important example, 
i.e., the uniform distribution case, and present explicit constants. Section~4 is devoted to the 
discussion on correlation functions. We consider 2-correlation functions only, 
which are useful 
to study current-current correlation. The idea itself applies
to higher correlation functions.

\section{Proof of Theorem~1}

\subsection{The density of states}

Let $\d_m=(\d_{nm})_{n\in\ze^d}\in\mathcal{H}$ for $m\in\ze^d$, where $\d_{nm}$ is the Kronecker symbol. 
We denote the $(n,m)$-entry of an operator $A$ on $\ell^2(\ze^d)$ 
by $A(n,m)=\jap{\d_n,A\d_m}$.  The following formula of the 
integrated density of states in terms of the spectral projectors is well-known:   
\[
d\mathcal{N}(\l)= \mathbb{E}(E_{H^\o}(d\l,0,0)) = \mathbb{E}\left(\langle \delta_0, E_{H^\o}(d\l)\delta_0 \rangle\right),
\]
where $E_A(\cdot)$ denotes the projection valued spectral measure of 
$A$. 
The
$\mathbb{E}(\cdot)$ denotes the expectation with respect to the randomness
(see, e.g., \cite{CarLac}, Remark~VI.1.5). {Since $\mathbb{E} \left( E_{H^\o}(d\lambda, 0, 0)\right)$ 
is a numerical measure, the general theory of Borel transforms (see Theorem 1.4.16, 
Corollary 1.4.11 of \cite{DemKri}) implies that the following limits exist 
almost every $\lambda$ with respect to the Lebesgue measure
\begin{equation}\label{rep-formula}
n(\l)= \frac1\pi \lim_{\e\to 0} \mathbb{E}(\Im (H^\o-\l-i\e)^{-1}(0,0)).
\end{equation}
and gives the density of the absolutely continuous part of $d\mathcal{N}$. }
Moreover, if $n(\l)$ exists for each point in an interval and bounded, 
then the spectrum is absolutely continuous on the interval by the above formula. 
Hence, in order to prove the analyticity of $n(\l)$, it suffices to show that 
$\mathbb{E}((H^\o-z)^{-1}(0,0))$ is analytic in a complex neighborhood of $I$. 
We use the \textit{random walk expansion}\/ of resolvent to analyze $\mathbb{E}((H^\o-z)^{-1}(0,0))$. 
The random walk expansion is proposed by Fr\"ohlich and Spencer\cite{FroSpe} 
and used by Constantinescu, Fr\"ohlich and Spencer\cite{CFS} for $n(E)$.
\subsection{Random walk expansion of resolvents}

We say $\c=(n_0,n_1,\dots, n_k)\in (\ze^d)^{k+1}$ is a \textit{path of length $k$}\/ if 
$|n_j-n_{j-1}|=1$ for $j=1,\dots, k$, and we write the initial point and the end point of $\c$ 
as $i(\c)=n_0$ and $t(\c)=n_k$, respectively. We write the set of all paths of length $k$ 
with the initial point $n_0$ and the end point $n_k$ by $\Gamma_k(n_0,n_k)$. 
We note $\#\Gamma_k(n_0,n_k)\leq (2d)^k$ for any $n_0,n_k\in\ze^d$. We use the 
Neumann series expansion of the resolvent:
\[
(H^\o-z)^{-1} =(V^\o-z)^{-1} \sum_{k=0}^\infty \bigpare{-H_0 (V^\o-z)^{-1}}^k,
\]
which converges if $|\Im z|> \norm{H_0}=2dh$. It is easy to see, 
{by expanding the summand on the right  hand side and noting that $H_0$ connects only
nearest neighbour sites and that the $(V - z)^{-1}(n, m) = (V(n) - z)^{-1}\delta_{nm}$, } 
\[
\bigpare{H_0 (V^\o-z)^{-1}}^k(n,m) =\sum_{\c\in\Gamma_k(n,m)}
h^k \prod_{j=1}^k (V^\o(n_j)-z)^{-1},
\]
where $\c=(n_0,\dots, n_k)$. Thus we learn 
\begin{equation}\label{eq-1}
\mathbb{E}\bigpare{(H^\o-z)^{-1}(n,m)} 
= \sum_{k=0}^\infty \sum_{\c\in\Gamma_k(n,m)} (-h)^k \mathbb{E}
\biggpare{\prod_{j=0}^k (V^\o(n_j)-z)^{-1}}.
\end{equation}
We now consider $\mathbb{E}\bigpare{\prod_{j=0}^k (V^\o(n_j)-z)^{-1}}$ for each 
$\c=(n_0,\dots, n_k)\in \Gamma_k(n,m)$. We denote 
$\#(\c,\a)=\#\bigset{n_j\in\c}{n_j=\a}\quad \text{for }\a\in\ze^d$.
By the independence of the site potentials, we can write
\begin{align}
\mathbb{E}\biggpare{\prod_{j=0}^k (V^\o(n_j)-z)^{-1}}
&=\prod_{\a\in\ze^d} \mathbb{E}\bigpare{(V^\o(\a)-z)^{-\#(\c,\a)}}\nonumber \\
&= \prod_{\a\in\ze^d} \int \frac{d\m(\l)}{(\l-z)^{\#(\c,\a)}}.\label{eq-2}
\end{align}
We note $\#(\c,\a)=0$ except for finitely many $\a$ and hence the product is a finite 
product. We also note $\sum_{\a\in\ze^d} \#(\c,\a)= k$ for $\c\in\Gamma_k(n,m)$. 

For $\ell\geq 0$, we set 
\[
B_\ell(z)= \int \frac{d\m(\l)}{(\l-z)^\ell},
\]
which is primarily defined for $z\in\co_+$, where $\co_\pm=\{z\in\co\,|\, \pm \Im(z)>0\}$. 

\subsection{Analytic continuation of $B_\ell(z)$}

In the following, we suppose $I=(a,b)$, $I'=(a-\d,b+\d)$ with some $\d>0$, and we 
write $\O_\d=\bigset{z\in\co}{\dist(z,I)<\d}$.
We suppose $\m$ has a density function $g(\l)$ on $I'$, and $g(\l)$ is extended to a 
complex function which is holomorphic in $\O_\d$ and continuous on $\overline{\O_\d}$. 

\begin{lem}
Under the assumptions above, $B_\ell(z)$ is extended to a holomorphic function in 
$\O_\d\cup\co_+$. Moreover, there is $C>1$ such that for any $0<\d'<\d$, 
\[
|B_\ell(z)|\leq C(\d-\d')^{-\ell}, \quad \text{for }z\in \O_{\d'},\; \ell=0,1,\dots. 
\]
\end{lem}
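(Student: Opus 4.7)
The plan is to split $B_\ell(z)$ into a ``singular'' piece supported on $\re\setminus I'$ and an ``analytic'' piece supported on $I'$, and to extend the latter by a contour deformation using the hypothesis that $g$ is holomorphic on $\O_\d$:
\[
B_\ell(z)=\int_{\re\setminus I'}\frac{d\m(\l)}{(\l-z)^\ell}+\int_{I'}\frac{g(\l)\,d\l}{(\l-z)^\ell}.
\]
The first summand is already holomorphic on $\O_\d$, since $\re\setminus I'$ is disjoint from $\overline{\O_\d}$; a reverse-triangle estimate (using $\dist(\l,I)\ge\d$ for $\l\in\re\setminus I'$ and $\dist(z,I)<\d'$ for $z\in\O_{\d'}$) gives $|\l-z|>\d-\d'$, so this piece is bounded by $(\d-\d')^{-\ell}$.

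For the second summand I would pick a contour $\Gamma\subset\overline{\O_\d}\cap\overline{\co_-}$ from $a-\d$ to $b+\d$ staying at distance exactly $\d$ from $I$ --- concretely, the lower ``half-stadium'' boundary consisting of two quarter-circles of radius $\d$ centred at $a$ and $b$ joined by the segment $\{x-i\d:a\le x\le b\}$. Since $g$ is holomorphic on $\O_\d$ and continuous on $\overline{\O_\d}$, and the region bounded by $I'$ and $\Gamma$ lies entirely in $\overline{\co_-}$ (hence misses every $z\in\co_+$), Cauchy's theorem yields
\[
\int_{I'}\frac{g(\l)\,d\l}{(\l-z)^\ell}=\int_{\Gamma}\frac{g(\l)\,d\l}{(\l-z)^\ell}\quad\text{for every }z\in\co_+.
\]
The right-hand side is holomorphic in $z$ away from $\Gamma$, and $\Gamma\subset\partial\O_\d\cap\overline{\co_-}$ avoids $\O_\d\cup\co_+$ entirely. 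Consequently
\[
\tilde B_\ell(z):=\int_{\re\setminus I'}\frac{d\m(\l)}{(\l-z)^\ell}+\int_{\Gamma}\frac{g(\l)\,d\l}{(\l-z)^\ell}
\]
is holomorphic on $\O_\d\cup\co_+$ and agrees with $B_\ell$ on $\co_+$; this is the required analytic continuation.

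The bound on $\O_{\d'}$ comes from the same reverse-triangle estimate applied to $\Gamma$: $\dist(\l,I)=\d$ gives $|\l-z|>\d-\d'$, so
\[
|\tilde B_\ell(z)|\le\m(\re\setminus I')(\d-\d')^{-\ell}+\mathrm{length}(\Gamma)\cdot\sup_{\overline{\O_\d}}|g|\cdot(\d-\d')^{-\ell}\le C(\d-\d')^{-\ell},
\]
with $C=1+(\pi\d+b-a)\sup_{\overline{\O_\d}}|g|$ (using $\m(\re\setminus I')\le1$), independent of both $\ell$ and $\d'$.

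The only step requiring care is the selection of $\Gamma$: it must lie inside the region of holomorphy of $g$, must together with $I'$ bound a region disjoint from $\co_+$ so that Cauchy's theorem incurs no residue term, and must stay at distance $\d$ from $I$ to yield the $(\d-\d')^{-\ell}$ decay. Once $\Gamma$ is fixed the uniformity of $C$ in $\ell$ is automatic, because the only $\ell$-dependent factor in either piece of the integrand is the common power $(\d-\d')^{-\ell}$.
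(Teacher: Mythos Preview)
Your proof is correct and essentially identical to the paper's: the contour $\Gamma$ you describe is precisely $\partial\O_\d\cap\overline{\co_-}$, which is the path the paper uses, and your constant $C=1+(\pi\d+b-a)\sup|g|$ matches theirs. You give a bit more justification for the Cauchy deformation step, but the argument is the same.
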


\begin{proof}
Let $\y=\pa\O_\d\cap \overline{\co_-}$ (see Figure 1). 

\begin{figure}[h]
\centering
\includegraphics[width=10cm]{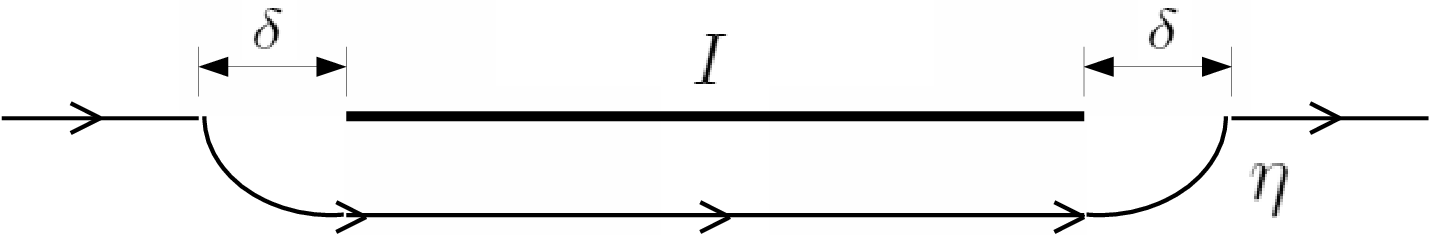}
\caption{The integration path $\y$}
\end{figure}

Then by the Cauchy theorem, we learn 
\[
B_\ell(z)= \int_{\re\setminus I'} \frac{d\m(\l)}{(\l-z)^\ell}
+\int_\y \frac{g(w)dw}{(w-z)^\ell}
\quad \text{for }z\in\co_+.
\]
From this representation, it is clear that $B_\ell(z)$ is extended to $\O_\d\cup\co_+$ 
as a holomorphic function. Also, by setting 
\[
C= 1+(b-a+\pi \d)\sup_{z\in\y}|g(z)|
\]
we have the inequality since $|z-w|\geq \d-\d'$ if $z\in \O_{\d'}$ and $w\in\y$ or 
$w\in\re\setminus I'$. 
\end{proof}

\subsection{Proof of the main theorem}

By Lemma~2 and \eqref{eq-2}, we now learn that $\mathbb{E}\biggpare{\prod_{j=0}^k (V^\o(n_j)-z)^{-1}}$ 
is extended to a holomorphic function in $\co_+\cup \O_\d$, and it is bounded by 
$C^k (\d-\d')^{-k}$ on $\O_{\d'}$. We recall $\#\Gamma_k(n,m)\leq (2d)^k$, and hence 
\begin{align*}
&\sum_{k=0}^\infty \sum_{\c\in\Gamma_k(n,m)} h^k \biggabs{\mathbb{E}
\biggpare{\prod_{j=0}^k (V^\o(n_j)-z)^{-1}}}\\
&\qquad \leq \sum_{k=0}^\infty h^k (2d)^k C^k (\d-\d')^{-k}
= \sum_{k=0}^\infty \biggpare{\frac{2dCh}{\d-\d'}}^k<\infty
\end{align*}
for $z\in \O_{\d'}$ if $h<(\d-\d')/(2dC)$. Thus by \eqref{eq-1}, under this condition, \\
$\mathbb{E}\bigpare{(H^\o-z)^{-1}(n,m)}$ is holomorphic in $z\in\O_{\d'}$. 
In particular, we learn $\mathbb{E}\bigpare{(H^\o-z)^{-1}(0,0)}$ 
is holomorphic in $z\in\O_{\d'}$, and we conclude Theorem~1 thanks to \eqref{rep-formula}. 
\qed

\section{An example}

Here we consider a typical Anderson model, that is, the case when $\m$ has the uniform 
distribution on $[-a,a]$ with $a>0$. In this case, it is well-known $\s(H^\o)=[-a-2dh,a+2dh]$ almost
surely. We also note that it is known that the density of states $n(\l)$ exists and it is 
smooth if $h$ is sufficiently small (\cite{BCKP}, Corollary~1.3). 

Since $\m$ has the density $1/2a$ on $(-a,a)$, 
we can apply Theorem~1 to conclude that for any $b<a$, $n(\l)$ is analytic on $(-b,b)$ 
if $h$ is sufficiently small. However, in this case, we can explicitly compute $B_\ell(z)$ 
to obtain a sharper result. We have 
\[
B_1(z)= \frac{1}{2a}\int_{-a}^a \frac{d\l}{\l-z}
= \frac{1}{2a}(\log(a-z)-\log(-a-z)),
\]
and for $\ell\geq 2$, 
\[
B_\ell(z)= \frac{1}{2a}\int_{-a}^a \frac{d\l}{(\l-z)^\ell}
= \frac{1}{2a(\ell-1)}\biggpare{\frac{1}{(a-z)^{\ell-1}}-\frac{1}{(-a-z)^{\ell-1}}}.
\]
Now we consider the case $h=1$ and change $a$, which is equivalent by a simple scaling. 
. If $\d(\log\d+\pi)\leq a$ and $1\leq \d\leq a$, 
then we have
\[
|B_\ell(z)| \leq \d^{-\ell}
\quad\text{for }z \text{ such that }\dist(z,\{\pm a\})\geq \d, \Re z\in (-a,a), 
\]
for all $\ell\in\mathbb{N}$. 
If $a>2d(\log(2d)+\pi)$, then we can choose $\d>2d$ arbitrarily close to $2d$ to satisfy the above conditions. 
Then by modifying the above argument, we learn that $\mathbb{E}((H^\o-z)^{-1}(0,0))$ 
is analytic on $\{z\,|\,\dist(z,\{\pm a\})>\d,\Re z\in(-a,a)\}$. 
Thus we have the following theorem. 

\begin{thm} Let $h=1$, and 
suppose $\m$ has the uniform distribution on $[-a,a]$ with $a>2d(\log(2d)+\pi)$. 
Then $n(\l)$ is analytic on $(-a+2d, a-2d)$. 
\end{thm}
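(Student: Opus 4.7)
The approach is to run the argument of Theorem 1 with the generic Lemma 2 estimate replaced by the sharper bound $|B_\ell(z)| \leq \delta^{-\ell}$ stated in the excerpt. Because the uniform density is entire, each $B_\ell$ extends meromorphically across the real line with singularities only at $\pm a$, which is what makes this bound attainable on the region $R_\delta := \{z : \dist(z,\{\pm a\}) \geq \delta,\ \Re z \in (-a,a)\}$ rather than on the thin tube used in Lemma 2. The gain is that the scale $\delta$ controlling $|B_\ell|$ can be taken independent of $\ell$ and essentially as large as $a$, rather than being tied to the width of a Cauchy contour.

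First I would establish the uniform bound $|B_\ell(z)| \leq \delta^{-\ell}$ on $R_\delta$, under the hypotheses $1 \leq \delta \leq a$ and $\delta(\log\delta + \pi) \leq a$. For $\ell \geq 2$, the explicit formula yields $|B_\ell(z)| \leq (a(\ell-1)\delta^{\ell-1})^{-1}$, which is at most $\delta^{-\ell}$ since $\delta \leq a \leq a(\ell-1)$. For $\ell = 1$, I would use $B_1(z) = (2a)^{-1}\log((a-z)/(-a-z))$ with the branch analytically continued from $\mathbb{C}_+$ (this is unambiguous since $R_\delta$ is simply connected after removing the branch points $\pm a$), together with the elementary estimate $|\log w| \leq \log|w| + \pi$; controlling $|(a-z)/(-a-z)|$ on $R_\delta$ and inserting the hypothesis $\delta(\log\delta + \pi) \leq a$ then produces the claimed bound. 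This $\ell = 1$ logarithmic bookkeeping is the most delicate technical point and is where I expect to spend most of the effort.

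With the uniform bound in hand, I insert it into the random-walk expansion \eqref{eq-1}--\eqref{eq-2}. Since $\sum_\alpha \#(\gamma,\alpha) = k$ for every $\gamma \in \Gamma_k(0,0)$, the product $\prod_\alpha B_{\#(\gamma,\alpha)}(z)$ is bounded by $\delta^{-k}$. Combined with $\#\Gamma_k(0,0) \leq (2d)^k$ and $h = 1$, the series for $\mathbb{E}((H^\omega - z)^{-1}(0,0))$ is dominated on $R_\delta$ by the geometric series $\sum_k (2d/\delta)^k$, which converges whenever $\delta > 2d$. Hence $\mathbb{E}((H^\omega - z)^{-1}(0,0))$ admits a holomorphic extension to $R_\delta$ for every such $\delta$.

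Finally, the assumption $a > 2d(\log(2d) + \pi)$ is calibrated exactly so that $\delta$ can be chosen arbitrarily close to $2d$ from above while still satisfying both $\delta(\log\delta + \pi) \leq a$ and $\delta \leq a$. Since $R_\delta$ contains a complex neighborhood of the real interval $(-a+\delta,\, a-\delta)$, letting $\delta \downarrow 2d$ yields holomorphy of $\mathbb{E}((H^\omega - z)^{-1}(0,0))$ in a complex neighborhood of $(-a+2d,\, a-2d)$. Invoking the representation formula \eqref{rep-formula} then delivers the asserted analyticity of $n(\lambda)$ on $(-a+2d,\, a-2d)$, completing the proof.
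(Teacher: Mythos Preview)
Your proposal is correct and follows essentially the same route as the paper: you replace the generic Lemma~2 bound by the explicit estimate $|B_\ell(z)|\leq\delta^{-\ell}$ on the region $\{\dist(z,\{\pm a\})\geq\delta,\ \Re z\in(-a,a)\}$ under the constraints $1\leq\delta\leq a$ and $\delta(\log\delta+\pi)\leq a$, then feed this into the random-walk expansion to get geometric convergence for $\delta>2d$, and finally let $\delta\downarrow 2d$ using the hypothesis on $a$. The only difference is that you spell out the verification of the $B_\ell$ bound (splitting into $\ell\geq 2$ via the explicit formula and $\ell=1$ via the logarithmic estimate), which the paper simply asserts.
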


\section{Correlation functions}
Here we extend our method to show the analyticity of correlation functions (see Bellissard-Hislop \cite{BelHis}). 
For simplicity, we consider 2-correlation functions only, which apply to current-current correlation 
functions. 

We denote the formal expression, where the limit is in the weak operator topology, 
\[
\d(H^\o-E)=\frac{1}{\pi}\lim_{\e\to+0} \Bigbrac{\frac{\e}{(H^\o-E)^2 +\e^2}}
\] 
when the limit is well-defined. Let $A_1$, $A_2$ be bounded operators, which { are } local in 
the following sense: there are $R,M>0$ such that 
\[
A_j(k,\ell)=0 \text{ if }|k-\ell|> R, \quad 
|A_j(k,\ell)|\leq M\text{ for } \forall k,\ell,
\]
with $j=1,2$. The 2-correlation function for $H^\o$, $A_1$ and $A_2$ is defined 
(see the Appendix) by
\[
K(e_1,e_2) =\mathbb{E}\bigbrac{\langle \delta_0, (\d(H^\o-e_1)A_1\d(H^\o-e_2)A_2)\delta_0 \rangle} 
\]
for almost every $e_1,e_2\in\re$. 

\begin{thm}
Let $I$, $I'$ be intervals as in Theorem~1. Then there is $\c>0$ such that 
$K(e_1,e_2)$ is analytic on $I\times I\setminus D(\c h)$, where 
$D(\b)= \bigset{(e_1,e_2)\in \re^2}{|e_1-e_2|\leq \b}$. 
\end{thm}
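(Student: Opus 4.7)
The plan is to express $K(e_1, e_2)$ as the boundary-value combination
\[
K(e_1, e_2) = \frac{1}{(2\pi i)^2}\bigpare{F_{++} - F_{+-} - F_{-+} + F_{--}}(e_1, e_2)
\]
of four limits of the two-resolvent expectation
\[
F(z_1, z_2) = \mathbb{E}\bigbrac{\jap{\d_0,\, (H^\o - z_1)^{-1} A_1 (H^\o - z_2)^{-1} A_2 \d_0}},
\]
where $F_{\s_1\s_2}(e_1, e_2) = \lim_{\e\dto 0} F(e_1 + i\s_1\e, e_2 + i\s_2\e)$ for $\s_j \in \{+,-\}$. The theorem will follow once each $F_{\s_1\s_2}$ is shown to extend analytically to a complex neighborhood of $I \times I \setminus D(\c h)$ for a suitable $\c > 0$.

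For $|\Im z_j| > 2dh$, applying the Neumann series of Section~2.2 to both resolvents gives
\[
F(z_1, z_2) = \sum_{k,\ell,m} A_1(k,\ell) A_2(m,0) \sum_{k_1, k_2 \ge 0}(-h)^{k_1+k_2}\sum_{\substack{\c_1 \in \Gamma_{k_1}(0,k) \\ \c_2 \in \Gamma_{k_2}(\ell,m)}}\, \prod_{\a \in \ze^d} B_{p_\a, q_\a}(z_1, z_2),
\]
with $p_\a = \#(\c_1, \a)$, $q_\a = \#(\c_2, \a)$, and $B_{p,q}(z_1, z_2) = \int d\m(\l)\,[(\l-z_1)^p(\l-z_2)^q]^{-1}$. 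The remaining task is to analytically continue $B_{p,q}$ in each of the four quadrant configurations and to verify convergence of the resulting path series in a complex neighborhood of the target region.

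For the same-sign choices $\s_1 = \s_2 = \s$ the contour deformation of Lemma~2 applied to the full $\l$-integral (pushing $I'$ into $\overline{\co_{-\s}}$) avoids both singularities $z_1, z_2 \in \co_\s$ and extends $B_{p,q}$ holomorphically to $(\O_{\d'})^2$ with $|B_{p,q}| \le C^{p+q+1}(\d-\d')^{-(p+q)}$, so the series converges just as in Theorem~1. The genuinely new issue is the mixed-sign case $F_{+-}$ (and symmetrically $F_{-+}$), which I plan to handle by separating variables through the partial-fraction identity
\[
\frac{1}{(\l-z_1)^p(\l-z_2)^q} = \sum_{j=1}^{p} \frac{a_j^{(p,q)}(z_1,z_2)}{(\l-z_1)^j} + \sum_{j=1}^{q}\frac{b_j^{(p,q)}(z_1,z_2)}{(\l-z_2)^j},
\]
whose coefficients, computed as residues, satisfy $|a_j^{(p,q)}|, |b_j^{(p,q)}| \le 2^{p+q}|z_1-z_2|^{-(p+q-j)}$. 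Integrating against $d\m$ yields
\[
B_{p,q}(z_1, z_2) = \sum_{j=1}^{p} a_j^{(p,q)}\, B_j^+(z_1) + \sum_{j=1}^{q} b_j^{(p,q)}\, B_j^-(z_2),
\]
where $B_j^\pm$ are the one-variable analytic continuations of Lemma~2 from $\co_\pm$. The right-hand side is jointly holomorphic on $\O_{\d'} \times \O_{\d'}$ off the diagonal $\{z_1 = z_2\}$, with $|B_{p,q}(z_1, z_2)| \le C(p+q)\, r^{-(p+q)}$, where $r := \min\bigpare{|z_1-z_2|,\,\d-\d'}$.

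With these bounds in hand, summing the path series proceeds as in Theorem~1: using $\sum_\a (p_\a + q_\a) = k_1+k_2+2$ to convert $\prod_\a |B_{p_\a,q_\a}|$ into a factor $(Ce/r)^{k_1+k_2+2}$ (the linear prefactors $p_\a+q_\a$ are absorbed into the exponential via $x \le e^x$), together with the path counts $(2d)^{k_j}$ and the compact support of the $A_j$, reduces the total to a geometric series in $2dhCe/r$. Choosing $\c$ large enough that $|e_1-e_2| > \c h$ forces $r > \c h/2$ in a small complex neighborhood and makes $2dhCe/r < 1$, yielding convergence and hence analyticity of $F_{+-}$. The main obstacle I anticipate is the mixed-sign case: one must verify that the binomial-type growth of the partial-fraction coefficients, tempered by the $|z_1-z_2|^{-(p+q-j)}$ factors, combines cleanly with the Lemma~2 bounds so that the single lower bound $|e_1-e_2| > \c h$ with $\c$ of order $d$ suffices to dominate the $2dh$ factor inherent to the random walk expansion.
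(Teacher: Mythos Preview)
Your proof is correct and follows the same overall architecture as the paper (reduction to the boundary values of $F(z_1,z_2)$, random walk expansion, analytic continuation of the site integrals $B_{p,q}$, geometric convergence), but it handles the key mixed-sign step differently. The paper does \emph{not} use partial fractions: for $(z_1,z_2)\in\co_+\times\co_-$ near $(E_1,E_2)$ with $E_2\ge E_1+2\d$, it deforms the single $\l$-contour to $\y=\partial(\co_+\cup\O_\d(E_1)\setminus\O_\d(E_2))$, i.e.\ the real axis with a downward bump at $E_1$ and an upward bump at $E_2$. This simultaneously keeps $z_1$ above and $z_2$ below the contour, so Lemma~5 gives the clean bound $|B_{k,\ell}|\le C(\d-\d')^{-k-\ell}$ with the same constant $C$ as in Lemma~2; the off-diagonal restriction enters only through the geometric condition $|E_1-E_2|\ge 2\d$ needed for the two bumps not to collide.

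Your partial-fraction route instead reduces $B_{p,q}$ to the one-variable continuations $B_j^{\pm}$ of Lemma~2, at the cost of the binomial-type coefficients $\binom{p+q-j-1}{p-j}|z_1-z_2|^{-(p+q-j)}$ and the extra $(p+q)$ prefactor you then absorb via $x\le e^x$. This is perfectly valid and has the virtue of invoking only Lemma~2 verbatim, but it yields a somewhat larger constant $\c$ (an extra factor of order $e$) and a bit more bookkeeping. The paper's two-bump contour is the more economical device here; your argument is a legitimate alternative that trades a new contour for algebra.
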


Namely, the 2-correlation function $K(e_1,e_2)$ is analytic away from a small neighborhood of the 
diagonal, and the width of the exceptional set is $O(h)$ as $h\to 0$. We note that for a fixed 
$h>0$, we do not have the analyticity in $(e_1,e_2)$ very close to the diagonal set. 

\begin{proof}
Let $I=[a,b]$, and let $a\leq E_1<E_2\leq b$. We show 
\[
F(z_1,z_2) =\mathbb{E}\bigbrac{((H^\o-z_1)^{-1}A_1(H^\o-z_2)^{-1}A_2)(0,0)}
\]
is analytically extended to a complex neighborhood of $(z_1,z_2)=(E_1,E_2)$ from 
$(z_1,z_2)\in \co_\pm\times\co_\mp$, or $(z_1,z_2)\in \co_\pm\times\co_\pm$. 
We consider the case:  $(z_1,z_2)\in \co_+\times \co_-$ only. The other cases can be  
handled similarly. We choose $\d>0$ so that
\begin{equation}\label{cond}
E_2\geq E_1+2\d, \quad [a-\d,b+\d]\Subset I'.
\end{equation}
By direct computations as in previous sections, we have 
\begin{align*}
(H^\o-z_1)^{-1}A_1(H^\o-z_2)^{-1}A_2 &= \sum_{k,\ell=0}^\infty 
(V^\o-z_1)^{-1}\bigbrac{(-H_0)(V^\o-z_1)^{-1}}^k \times \\
&\times A_1(V^\o-z_2)^{-1}\bigbrac{(-H_0)(V^\o-z_2)^{-1}}^\ell A_2.
\end{align*}
We denote the set of the paths satisfying the following conditions by $\Gamma_{k,\ell}(n,m)$: 
$\c=(n_0,n_1,\dots, n_k, m_0,\dots, m_\ell, m_{\ell+1})\in (\ze^d)^{k+\ell+3}$ such that 
$n_0=n$, $m_{\ell+1}=m$, $|n_i-n_{i-1}|=1$ for $i=1,\dots, k$, $|m_j-m_{j-1}|=1$ for 
$j=1,\dots, \ell$, $|n_k-m_0|\leq R$ and $|m_\ell-m_{\ell+1}|\leq R$. 
We note $\#\Gamma_{k,\ell}(n,m)\leq (2R)^{2d}(2d)^{k+\ell}$. 
Then we have 
\begin{align*}
&\mathbb{E}\bigbrac{((H^\o-z_1)^{-1}A_1(H^\o-z_2)^{-1}A_2)(n,m)} \\
&\quad = \sum_{k,\ell=0}^\infty \sum_{\c\in \Gamma_{k,\ell}(n,m)} (-h)^{k+\ell} 
A_1(n_k,m_0)A_2(m_\ell,m_{\ell+1})\times \\
&\qquad \qquad \times \mathbb{E}\biggbrac{\prod_{i=1}^k (V^\o(n_i)-z_1)^{-1}
\prod_{j=1}^\ell (V^\o(m_j)-z_2)^{-1}}.
\end{align*}
Now we write 
\[
\n_1(\c,\a)=\#\bigset{n_i}{n_i=\a}, \quad 
\n_2(\c,\a)=\#\bigset{m_j}{m_j=\a}\quad 
\]
for $\c=(n_0,n_1,\dots, n_k, m_0,\dots, m_\ell, m_{\ell+1})\in \Gamma_{k,\ell}(n,m)$.
Then by the independence, it is easy to observe 
\begin{align*}
&\mathbb{E}\biggbrac{\prod_{i=1}^k (V^\o(n_i)-z_1)^{-1}\prod_{j=1}^\ell (V^\o(m_j)-z_2)^{-1}}\\
&\quad=\prod_{\a\in\ze^d} \mathbb{E}\bigbrac{(V^\o(\a)-z_1)^{-\n_1(\c,\a)}
(V^\o(\a)-z_2)^{-\n_2(\c,\a)}}\\
&\quad =\prod_{\a\in\ze^d} \int \frac{d\m(\l)}{(\l-z_1)^{\n_1(\c,\a)}(\l-z_2)^{\n_2(\c,\a)}}.
\end{align*}
We also note $\sum_\a \n_1(\c,\a)=k$ and $\sum_\a \n_2(\c,\a)=\ell$ for $\c\in\Gamma_{k,\ell}(n,m)$. 
We denote 
\[
B_{k,\ell}(z_1,z_2) = \int \frac{d\m(\l)}{(\l-z_1)^{k}(\l-z_2)^{\ell}},
\quad (z_1,z_2)\in\co_+\times \co_-, 
\]
and $\O_\d(E)=\bigset{z\in\co}{|z-E|<\d}$.
Then, as well as Lemma~2, we have the following lemma: 

\begin{lem}
$B_{k,\ell}(z_1,z_2)$ is extended to a holomorphic function in $\O_\d(E_1)\times \O_\d(E_2)$. 
Moreover, there is $C>1$ such that for any $0<\d'<\d$, $k$ and $\ell$, 
\[
|B_{k,\ell}(z_1,z_2)|\leq C (\d-\d')^{-k-\ell}
\quad\text{for } (z_1,z_2)\in \O_{\d}(E_1)\times \O_{\d'}(E_2). 
\]
\end{lem}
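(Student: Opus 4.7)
The plan is to mirror the proof of Lemma~2, but with a contour that bumps \emph{upward} near $E_1$ and \emph{downward} near $E_2$. Because $B_{k,\ell}$ is defined on $\co_+\times\co_-$, the pole at $\l=z_1$ lies in the upper half plane and the pole at $\l=z_2$ in the lower half plane, so these two deformation directions are each compatible with Cauchy's theorem. The essential structural input is the separation $E_2\geq E_1+2\d$ from \eqref{cond}, which ensures that the two bumps of radius $\d$ can be placed disjointly inside $\O_\d$ where $g$ is holomorphic.

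Concretely, I would let $\eta$ be the path from $a-\d$ to $b+\d$ that follows $\re$ except that on $[E_1-\d,E_1+\d]$ it is the upper semicircle of radius $\d$ around $E_1$ and on $[E_2-\d,E_2+\d]$ it is the lower semicircle of radius $\d$ around $E_2$. For $(z_1,z_2)$ with $\Im z_1>\d$ and $\Im z_2<-\d$ the closed curve $\eta\cup(-I')$ encloses the upper half of $\O_\d(E_1)$ together with the lower half of $\O_\d(E_2)$, and neither of these regions contains $z_1$ or $z_2$; Cauchy's theorem therefore gives
\[
B_{k,\ell}(z_1,z_2)=\int_{\re\setminus I'}\frac{d\m(\l)}{(\l-z_1)^k(\l-z_2)^\ell}+\int_\eta\frac{g(w)\,dw}{(w-z_1)^k(w-z_2)^\ell}.
\]
For $(z_1,z_2)\in\O_\d(E_1)\times\O_\d(E_2)$ the separation hypothesis $|E_1-E_2|\geq 2\d$ immediately implies that neither $z_1$ nor $z_2$ lies on $\eta\cup(\re\setminus I')$, so the right-hand side is visibly holomorphic on this bi-disc. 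Taking this formula as the definition of the extension produces the desired holomorphic continuation of $B_{k,\ell}$.

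The bound then reduces to the uniform estimate $|w-z_j|\geq\d-\d'$ for $w\in\eta\cup(\re\setminus I')$ and $z_j\in\O_{\d'}(E_j)$. For $\l\in\re\setminus I'$ one has $|\l-E_j|\geq\d$, and hence $|\l-z_j|\geq\d-\d'$. For $w$ on the semicircle about $E_j$ itself the identity $|w-E_j|=\d$ yields the same bound; for $w$ on the semicircle about the other point $E_{3-j}$, the separation yields $|w-E_j|\geq|E_j-E_{3-j}|-\d\geq\d$ and therefore $|w-z_j|\geq\d-\d'$ — this is precisely where $E_2-E_1\geq 2\d$ is consumed. The straight portions of $\eta$ are handled by the same elementary estimate. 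With $C=1+(b-a+2\pi\d)\sup_\eta|g|$ or a similar explicit constant, the product bound $|B_{k,\ell}(z_1,z_2)|\leq C(\d-\d')^{-k-\ell}$ follows at once.

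The main obstacle is topological rather than computational: the choice of bumps (up at $E_1$, down at $E_2$) is forced by the quadrant $\co_+\times\co_-$ from which the continuation is taken, and the three other boundary-value combinations that appear in the proof of Theorem~3 require mirrored bump patterns. The hypothesis $E_2-E_1\geq 2\d$ is indispensable — if the discs $\O_\d(E_j)$ overlapped, no disjoint pair of bumps would fit inside $\O_\d$ and the whole deformation scheme would collapse — and this is the direct source of the excluded diagonal $D(\c h)$ in Theorem~3.
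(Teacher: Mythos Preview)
Your overall plan is right and matches the paper's strategy, but the contour is deformed in the wrong directions. You bump \emph{up} at $E_1$ and \emph{down} at $E_2$, whereas the correct choice (the paper's contour $\eta=\partial(\co_+\cup\O_\d(E_1)\setminus\O_\d(E_2))$) bumps \emph{down} at $E_1$ and \emph{up} at $E_2$. This is not a cosmetic issue: with your $\eta$, the formula you derive does \emph{not} give the analytic continuation of $B_{k,\ell}$ from $\co_+\times\co_-$.

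To see the failure concretely, take $z_1$ in the upper half of $\O_\d(E_1)$ and $z_2$ in the lower half of $\O_\d(E_2)$, so that $(z_1,z_2)\in\co_+\times\co_-$ and the original definition of $B_{k,\ell}$ applies. The closed curve $I'\cup(-\eta)$ then encloses precisely the upper half-disk at $E_1$ (which contains the pole $w=z_1$) and the lower half-disk at $E_2$ (which contains the pole $w=z_2$). Hence $\int_{I'}-\int_\eta$ picks up nonzero residue contributions, and your right-hand side differs from $B_{k,\ell}(z_1,z_2)$ on this overlap. Equivalently: your $\eta$ separates the plane so that $\{\Im z_1>\d\}$ lies above it while $\O_\d(E_1)$ lies below it, so the holomorphic function defined by your contour integral on $\O_\d(E_1)$ is a different branch, not the continuation from $\co_+$. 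The heuristic ``the pole at $z_1$ lies in the upper half plane, so bump upward'' is exactly backwards: to let $z_1$ descend from $\co_+$ through the real axis you must push the contour \emph{down}, away from the approaching pole --- just as in Lemma~2, where the continuation from $\co_+$ used the lower arc $\partial\O_\d\cap\overline{\co_-}$.

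If you simply swap the two semicircles --- lower arc at $E_1$, upper arc at $E_2$ --- every subsequent line of your argument (the verification that $z_1,z_2\notin\eta$, the distance estimates $|w-z_j|\geq\d-\d'$, the use of $E_2-E_1\geq 2\d$, and the final bound with an explicit $C$) goes through unchanged and coincides with the paper's sketch.
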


We note the constant $C$ is also independent of $E_1$, $E_2$ and $\d,\d'>0$ as long as 
they satisfy \eqref{cond}. The proof of Lemma~5 is similar to that of Lemma~2, but we 
use the contour:
\[
\y=\partial(\co_+\cup \O_\d(E_1)\setminus \O_\d(E_2))
\]
to represent $B_{k,\ell}(z_1,z_2)$ by a contour integral (see Figure 2).
We omit the detail.  
\begin{figure}[h]
\centering
\includegraphics[width=10cm]{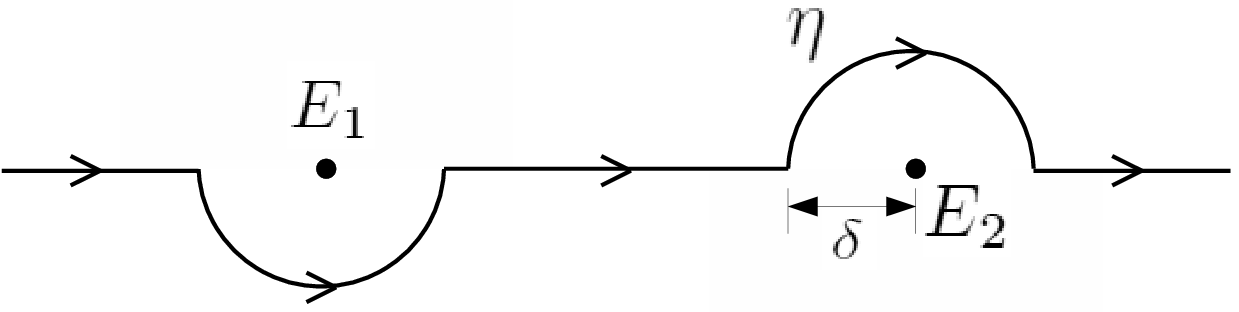}
\caption{The integration path}
\end{figure}

Using the lemma, we learn
\[
|F(z_1,z_2)|\leq \sum_{k,\ell=0}^\infty  (2R)^{2d} (2d)^{k+\ell} h^{k+\ell} M^2 C^{k+\ell}
(\d-\d')^{-k-\ell}
\]
for $(z_1,z_2)\in \O_{\d'}(E_1)\times \O_{\d'}(E_2)$. If 
\[
2d h C(\d-\d')^{-1}<1,\text{ i.e., } \d-\d' >2dCh, 
\]
then the random walk expansion converges uniformly in $\O_{\d'}(E_1)\times \O_{\d'}(E_2)$, 
and in particular, $F(z_1,z_2)$ is analytic. We may choose $\d'=\d/2$, and we conclude 
$F(e_1,e_2)$ is analytic if $|e_1-e_2|>8dCh$. This, combined with other cases, 
implies the assertion. 
\end{proof}

\section{Appendix}

We define the correlation function $K(e_1, e_2)$ here.  Given a pair of vectors
$f, g \in \ell^2({\mathbb Z}^d)$ and any self adjoint operator $A$, 
we see that the limits 
$$
\langle f,  \delta(A-E) g\rangle = lim_{\e \rightarrow 0} \langle f, \frac{\epsilon}{(A-\lambda)^2 + \e^2} g \rangle 
$$
exist for almost every $\lambda$ with respect to the Lebesgue measure, by using Theorem 1.4.16 
\cite{DemKri} and polarization identity to write the finite complex measure 
$\langle f, E_A(\cdot) g\rangle$ as a linear combination of the positive finite measures 
$$
\langle h_\rho , E_A(\cdot) h_\rho \rangle, ~ h_\rho = f+ \rho g ~ \rho \in \{-1,1, 0, i, -i\}.
$$
Thus the function $r(E) = \langle f,  \delta(H - E) g\rangle$  gives the density of 
the measure $\langle f, E_A(\cdot) g\rangle$ almost every $E$.  Similarly computing (and
justifying the exchange of ${\mathbb E}$ and the integrals with respect to $\l_1, \l_2$, by
Fubini,  
\begin{equation}\begin{split}
& {\mathbb E} \langle \delta_0,  \frac{\e}{(H^\o-E)^2 + \e^2 } A_2 \frac{\e}{(H^\o - e_2)^2 + \e^2}A_1 \delta_0 \rangle \\
&  = \int \frac{\e}{(\l_1 - e_1)^2+\e^2} \frac{\e}{(\l_2 - e_2)^2 + \e^2} {\mathbb E} \left( \langle \delta_0, E_{H^\o}(d\l_1) A_2 E_{H^\o}(d\l_2) A_1 \delta_0 \rangle \right) \\
&= \int \frac{\e}{(\l_1 - e_1)^2+\e^2} \frac{\e}{(\l_2 - e_2)^2 + \e^2} \mu(d\l_1, d\l_2), 
\end{split}
\end{equation} 
where $\mu$ is the finite complex correlation measure 
$$
\mu(d\l_1, d\l_2) = {\mathbb E} \left( \langle \delta_0, E_{H^\o}(d\l_1) A_2 E_{H^\o}(d\l_2) A_1 \delta_0 \rangle \right)
$$
As in the proof of 
Theorem 1.4.16 \cite{DemKri}, we can show that the limits 
$$  
K(e_1, e_2) = \lim_{\e \rightarrow 0} \int \frac{\e}{(\l_1 - e_1)^2+\e^2} \frac{\e}{(\l_2 - e_2)^2 + \e^2} \mu(d\l_1, d\l_2)
$$
exist for almost every $(e_1, e_2) \in {\mathbb R}^2$.  
Now note that the complex valued function 
$$
\int \frac{1}{\l_1 - z_1}  \frac{1}{\l_1 - z_2} \mu(d\l_1, d\l_2)
$$
of two complex variables is analytic in a neighborhood of $(e_1,e_2)$ if and only if it is analytic as a function of 
the real variables $(e_1, e_2)$. Therefore in the proof of the
Theorem~4 we consider the function 
$$
F(z_1,z_2) =\mathbb{E}\bigbrac{((H^\o-z_1)^{-1}A_1(H^\o-z_2)^{-1}A_2)(0,0)}
$$ 
and show it is analytic in the region stated in the theorem. 

\small

\end{document}